\providecommand{\U}[1]{\protect\rule{.1in}{.1in}}
\newtheorem{theorem}{Theorem}[section]
\newtheorem{corollary}[theorem]{Corollary}
\newtheorem{lemma}[theorem]{Lemma}
\newtheorem{remark}[theorem]{Remark}
\newenvironment{proof}[1][Proof]{\noindent\textbf{#1.} }{\ \rule{0.5em}{0.5em}}
\begin{document}

\author{Vadim E. Levit\\Ariel University Center of Samaria, Israel\\levitv@ariel.ac.il
\and Eugen Mandrescu\\Holon Institute of Technology, Israel\\eugen\_m@hit.ac.il}
\title{On the Core of a Unicyclic Graph}
\date{}
\maketitle

\begin{abstract}
A set $S\subseteq V$ is \textit{independent} in a graph $G=\left(  V,E\right)
$ if no two vertices from $S$ are adjacent. By \textrm{core}$\left(  G\right)
$ we mean the intersection of all maximum independent sets. The
\textit{independence number} $\alpha(G)$ is the cardinality of a maximum
independent set, while $\mu(G)$ is the size of a maximum matching in $G$.

A connected graph having only one cycle, say $C$, is a \textit{unicyclic
graph}. In this paper we prove that if $G$ is a unicyclic graph of order $n$
and $n-1=\alpha(G)+\mu(G)$, then $\mathrm{core}\left(  G\right)  $ coincides
with the union of cores of all trees in $G-C$.

\textbf{Keywords:} independent set, core, matching, unicyclic graph,
K\"{o}nig-Egerv\'{a}ry graph.

\end{abstract}

\section{Introduction}

Throughout this paper $G=(V,E)$ is a simple (i.e., a finite, undirected,
loopless and without multiple edges) graph with vertex set $V=V(G)$ and edge
set $E=E(G)$. If $X\subset V$, then $G[X]$ is the subgraph of $G$ spanned by
$X$. By $G-W$ we mean the subgraph $G[V-W]$, if $W\subset V(G)$. For $F\subset
E(G)$, by $G-F$ we denote the partial subgraph of $G$ obtained by deleting the
edges of $F$, and we use $G-e$, if $W$ $=\{e\}$. If $A,B$ $\subset V$ and
$A\cap B=\emptyset$, then $(A,B)$ stands for the set $\{e=ab:a\in A,b\in
B,e\in E\}$. The neighborhood of a vertex $v\in V$ is the set $N(v)=\{w:w\in
V$ \textit{and} $vw\in E\}$, and $N(A)=\cup\{N(v):v\in A\}$, $N[A]=A\cup N(A)$
for $A\subset V$. By $C_{n},K_{n}$ we mean the chordless cycle on $n\geq$ $4$
vertices, and respectively the complete graph on $n\geq1$ vertices.

A set $S$ of vertices is \textit{independent} if no two vertices from $S$ are
adjacent, and an independent set of maximum size will be referred to as a
\textit{maximum independent set}. The \textit{independence number }of $G$,
denoted by $\alpha(G)$, is the size of a maximum independent set\textit{\ }of
$G$. Let $\Omega(G)$ denote the family $\{S:S$ \textit{is a maximum
independent set of} $G\}$, while
\[
\mathrm{core}(G)=\cap\{S:S\in\Omega(G)\}\text{ \cite{levm3}.}%
\]
An edge $e\in E(G)$ is $\alpha$-\textit{critical} whenever $\alpha
(G-e)>\alpha(G)$. Notice that the inequalities $\alpha(G)\leq\alpha
(G-e)\leq\alpha(G)+1$ hold for each edge $e$.

A matching (i.e., a set of non-incident edges of $G$) of maximum cardinality
$\mu(G)$ is a \textit{maximum matching}, and a \textit{perfect matching} is
one covering all vertices of $G$. An edge $e\in E(G)$ is $\mu$%
-\textit{critical }provided $\mu(G-e)<\mu(G)$.

\begin{theorem}
\label{th2}\cite{LevMan3} For every graph $G$ no $\alpha$-critical edge has an
endpoint in $N[\mathrm{core}(G)]$.
\end{theorem}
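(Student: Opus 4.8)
The plan is to start from the definition of an $\alpha$-critical edge and convert it into a concrete statement about maximum independent sets. Let $e=xy$ be $\alpha$-critical, so $\alpha(G-e)>\alpha(G)$; since the excerpt records the inequality $\alpha(G)\le\alpha(G-e)\le\alpha(G)+1$, this forces $\alpha(G-e)=\alpha(G)+1$. I would fix a maximum independent set $S$ of $G-e$, so that $|S|=\alpha(G)+1$. The first key observation is that $S$ must contain both $x$ and $y$: the only adjacency of $G$ that $S$ could fail to respect is the deleted edge $e$, so if $S$ omitted either endpoint it would already be independent in $G$, contradicting $|S|>\alpha(G)$.

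Next I would manufacture two maximum independent sets of $G$ itself by breaking the single offending edge. Put $A=S\setminus\{y\}$ and $B=S\setminus\{x\}$. Each deletion destroys the only $G$-edge lying inside $S$, so both $A$ and $B$ are independent in $G$, and each has cardinality $\alpha(G)$; hence $A,B\in\Omega(G)$. By construction $x\in A$, $x\notin B$, $y\in B$, and $y\notin A$, so $A$ and $B$ differ precisely by the swap $x\leftrightarrow y$.

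The final step treats the two ways a vertex can belong to $N[\mathrm{core}(G)]$. Suppose, for contradiction, that $x\in N[\mathrm{core}(G)]$. If $x\in\mathrm{core}(G)$, then $x$ lies in every member of $\Omega(G)$; but $x\notin B\in\Omega(G)$, a contradiction. Otherwise $x\in N(\mathrm{core}(G))$, so $x$ has a neighbor $v\in\mathrm{core}(G)$; since $v$ lies in every maximum independent set we get $v\in A$, yet $x\in A$ and $xv\in E(G)$ contradicts the independence of $A$. Either way $x\notin N[\mathrm{core}(G)]$. The same argument with the roles of $x$ and $y$ (and of $A$ and $B$) interchanged yields $y\notin N[\mathrm{core}(G)]$, which finishes the proof; note that if $\mathrm{core}(G)=\emptyset$ the conclusion is vacuous, and this case is absorbed automatically since neither subcase above can then occur.

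As for difficulty, the argument is short, and the only genuine obstacle is the opening reformulation: recognizing that $\alpha$-criticality of $e=xy$ is equivalent to the existence of a maximum independent set of $G$ containing $x$ alongside one containing $y$, obtained from a common $(\alpha(G)+1)$-set of $G-e$ by deleting one endpoint. Once that bookkeeping is set up, the two contradictions follow immediately from the defining property of the core, and no case analysis beyond the subcases $x\in\mathrm{core}(G)$ and $x\in N(\mathrm{core}(G))$ is required.
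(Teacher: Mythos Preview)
Your argument is correct. The paper itself does not prove Theorem~\ref{th2}; it is quoted from \cite{LevMan3} and used as a black box, so there is no in-paper proof to compare against. Your approach is the natural direct one: from a maximum independent set $S$ of $G-e$ with $|S|=\alpha(G)+1$ you peel off each endpoint to obtain $A,B\in\Omega(G)$ witnessing $x\notin\mathrm{core}(G)$ (via $B$) and $x\notin N(\mathrm{core}(G))$ (via $A$), and symmetrically for $y$. One cosmetic remark: in the subcase $x\in N(\mathrm{core}(G))$ with neighbour $v\in\mathrm{core}(G)$, it is worth noting that $v\neq y$, since $y\notin A\in\Omega(G)$ already forces $y\notin\mathrm{core}(G)$; this guarantees that ``$v\in A$'' is a genuine membership (rather than an immediate contradiction with $A=S\setminus\{y\}$) before you invoke the adjacency $xv$. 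Either way a contradiction results, so the proof stands.
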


It is well-known that
\[
\lfloor n/2\rfloor+1\leq\alpha(G)+\mu(G)\leq n
\]
hold for every graph $G$ with $n$ vertices. If $\alpha(G)+\mu(G)=n$, then $G$
is called a \textit{K\"{o}nig-Egerv\'{a}ry graph }\cite{dem}, \cite{ster}.
Several properties of K\"{o}nig-Egerv\'{a}ry graphs are presented in
\cite{Gavril}, \cite{KoNgPeis}, \cite{Larson2011}, \cite{levm4},
\cite{LevMan09}, \cite{LevManKE2009}.

It is known that every bipartite graph is a K\"{o}nig-Egerv\'{a}ry\emph{
}graph as well \cite{eger}, \cite{koen}. This class includes also
non-bipartite graphs (see, for instance, the graph $G$ in Figure
\ref{fig112}).\begin{figure}[h]
\setlength{\unitlength}{1cm}\begin{picture}(5,1.8)\thicklines
\multiput(4,0.5)(1,0){5}{\circle*{0.29}}
\multiput(5,1.5)(2,0){2}{\circle*{0.29}}
\put(4,0.5){\line(1,0){4}}
\put(5,0.5){\line(0,1){1}}
\put(7,1.5){\line(1,-1){1}}
\put(7,0.5){\line(0,1){1}}
\put(4,0.1){\makebox(0,0){$a$}}
\put(4.7,1.5){\makebox(0,0){$b$}}
\put(6,0.1){\makebox(0,0){$c$}}
\put(5,0.1){\makebox(0,0){$u$}}
\put(7,0.1){\makebox(0,0){$v$}}
\put(6.7,1.5){\makebox(0,0){$x$}}
\put(8,0.1){\makebox(0,0){$y$}}
\put(3.2,1){\makebox(0,0){$G$}}
\end{picture}
\caption{A K\"{o}nig-Egerv\'{a}ry graph with $\alpha(G)=\left\vert \left\{
a,b,c,x\right\}  \right\vert $ and $\mu(G)=\left\vert \left\{
au,cv,xy\right\}  \right\vert $.}%
\label{fig112}%
\end{figure}

\begin{theorem}
\label{th1} If $G$ is a K\"{o}nig-Egerv\'{a}ry graph, then

\emph{(i)} \cite{levm4} every maximum matching matches $N($\textrm{core}$(G))
$ into \textrm{core}$(G)$;

\emph{(ii) }\cite{LevMan3} $H=G-N[$\textrm{core}$(G)]$ is a
K\"{o}nig-Egerv\'{a}ry\emph{ }graph with a perfect matching and each maximum
matching of $H$ can be enlarged to a maximum matching of $G$.
\end{theorem}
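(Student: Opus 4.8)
The plan is to reduce both parts of Theorem~\ref{th1} to a single counting fact about maximum matchings of K\"{o}nig-Egerv\'{a}ry graphs. \textbf{Step 1 (a matching lemma).} I would first show that if $G$ is K\"{o}nig-Egerv\'{a}ry, $S\in\Omega(G)$ and $M$ is any maximum matching, then every edge of $M$ has exactly one end in $S$ and one in $V\setminus S$, every vertex of $V\setminus S$ is $M$-saturated, and consequently every $M$-unsaturated vertex lies in $S$. This is a double count: $V\setminus S$ is a vertex cover of $G$ and $|V\setminus S|=n-\alpha(G)=\mu(G)=|M|$; if $M$ had $k$ edges with both ends in $V\setminus S$ and $\ell$ edges with exactly one end there (no edge can avoid the cover), then $k+\ell=|M|=|V\setminus S|$, while the $M$-saturated vertices of $V\setminus S$ number $2k+\ell\le|V\setminus S|$, forcing $k=0$, $\ell=|V\setminus S|$.

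\textbf{Step 2 (part (i)).} Let $v\in N(\mathrm{core}(G))$ and pick $c\in\mathrm{core}(G)$ with $vc\in E(G)$. For every $S\in\Omega(G)$ we have $c\in\mathrm{core}(G)\subseteq S$, so $v\notin S$ since $S$ is independent; thus $v\in V\setminus S$ for each $S\in\Omega(G)$. Fixing a maximum matching $M$, Step 1 shows that $v$ is $M$-saturated and its $M$-partner lies in $S$, for every $S\in\Omega(G)$; intersecting over $S$ puts that partner in $\mathrm{core}(G)$, which is exactly (i).

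\textbf{Step 3 (part (ii)).} Put $C=\mathrm{core}(G)$, $N=N(C)$, and $H=G-N[C]$. Since $V(H)$ meets neither $C$ nor $N(C)$, for any independent $T\subseteq V(H)$ the set $T\cup C$ is independent in $G$, while $S\setminus C=S\cap V(H)$ for each $S\in\Omega(G)$; this gives $\alpha(H)=\alpha(G)-|C|$ (and in fact a bijection $S\mapsto S\setminus C$ between $\Omega(G)$ and $\Omega(H)$, so $\mathrm{core}(H)=\emptyset$). Now fix a maximum matching $M$ of $G$. By (i) each vertex of $N$ is matched by $M$ into $C$, and conversely every edge of $M$ meeting $N[C]$ joins $N$ and $C$ (an edge at a vertex of $C$ ends in $N(C)=N$); hence exactly $|N|$ edges of $M$ lie in $N[C]$ and the remaining $\mu(G)-|N|$ of them form a matching $M_H$ of $H$. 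So $\mu(H)\ge\mu(G)-|N|$, and since $\alpha(H)+\mu(H)\le|V(H)|$ and $|V(H)|=n-|C|-|N|$ one also gets $\mu(H)\le\mu(G)-|N|$; thus $\mu(H)=\mu(G)-|N|$ and $\alpha(H)+\mu(H)=|V(H)|$, i.e. $H$ is K\"{o}nig-Egerv\'{a}ry. To see $H$ has a perfect matching I would compute $|C|$: by Step 1 the $n-2\mu(G)$ vertices left $M$-unsaturated all lie in $C$, while the $M$-saturated vertices of $C$ are exactly the $M$-partners of the vertices of $N$, of which there are $|N|$; hence $|C|=|N|+n-2\mu(G)$, which (using $\alpha(G)+\mu(G)=n$) rearranges to $\alpha(H)=\mu(H)$ and $|V(H)|=2\mu(H)$, so $M_H$ is perfect. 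Finally, given any maximum matching $M'$ of $H$, necessarily perfect in $H$, the $|N|$ edges of $M$ inside $N[C]$ form a matching $M_0$ of $N$ into $C$ that is vertex-disjoint from $M'$, so $M'\cup M_0$ is a matching of $G$ of size $(\mu(G)-|N|)+|N|=\mu(G)$, a maximum matching of $G$ enlarging $M'$.

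\textbf{Where the difficulty lies.} The whole argument hinges on Step 1 being sharp enough to control \emph{both} the $M$-unsaturated vertices and the $M$-saturated vertices of $\mathrm{core}(G)$; the one genuinely arithmetical point is the resulting identity $|\mathrm{core}(G)|=|N(\mathrm{core}(G))|+n-2\mu(G)$, which is exactly what forces $H$ to carry a perfect matching. If one preferred to avoid Step 1, part (i) could instead be approached through Theorem~\ref{th2} by an alternating-path exchange between two maximum independent sets, but the double-count route is shorter and feeds directly into part (ii).
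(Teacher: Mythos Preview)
Your argument is correct and self-contained; the double count in Step~1 is the standard characterization of K\"{o}nig--Egerv\'{a}ry graphs and everything downstream follows cleanly, including the identity $|\mathrm{core}(G)|=|N(\mathrm{core}(G))|+n-2\mu(G)$ that forces the perfect matching on $H$.

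There is nothing in the present paper to compare against: Theorem~\ref{th1} is not proved here but quoted from the earlier papers \cite{levm4} and \cite{LevMan3}, so the paper's ``own proof'' is simply a citation. Your write-up would in fact serve as a complete stand-alone proof of the result, which is more than the paper itself supplies.
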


The graph $G$ is called \textit{unicyclic} if it is connected and has a unique
cycle, which we denote by $C=\left(  V(C),E\left(  C\right)  \right)  $. Let%
\[
N_{1}(C)=\{v:v\in V\left(  G\right)  -V(C),N(v)\cap V(C)\neq\emptyset\},
\]
and $T_{x}=(V_{x},E_{x})$ be the tree of $G-xy$ containing $x$, where $x\in
N_{1}(C),y\in V(C)$.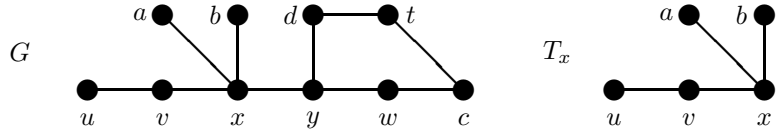
\begin{figure}[h]
\setlength{\unitlength}{1cm}\begin{picture}(5,1.8)\thicklines
\multiput(2,0.5)(1,0){6}{\circle*{0.29}}
\multiput(3,1.5)(1,0){4}{\circle*{0.29}}
\put(2,0.5){\line(1,0){5}}
\put(3,1.5){\line(1,-1){1}}
\put(4,0.5){\line(0,1){1}}
\put(5,1.5){\line(1,0){1}}
\put(6,1.5){\line(1,-1){1}}
\put(5,0.5){\line(0,1){1}}
\put(2,0.1){\makebox(0,0){$u$}}
\put(3,0.1){\makebox(0,0){$v$}}
\put(4,0.1){\makebox(0,0){$x$}}
\put(5,0.1){\makebox(0,0){$y$}}
\put(6,0.1){\makebox(0,0){$w$}}
\put(7,0.1){\makebox(0,0){$c$}}
\put(2.7,1.5){\makebox(0,0){$a$}}
\put(3.7,1.5){\makebox(0,0){$b$}}
\put(4.7,1.5){\makebox(0,0){$d$}}
\put(6.3,1.5){\makebox(0,0){$t$}}
\put(1.1,1){\makebox(0,0){$G$}}
\multiput(9,0.5)(1,0){3}{\circle*{0.29}}
\multiput(10,1.5)(1,0){2}{\circle*{0.29}}
\put(9,0.5){\line(1,0){2}}
\put(10,1.5){\line(1,-1){1}}
\put(11,0.5){\line(0,1){1}}
\put(9,0.1){\makebox(0,0){$u$}}
\put(10,0.1){\makebox(0,0){$v$}}
\put(11,0.1){\makebox(0,0){$x$}}
\put(9.7,1.5){\makebox(0,0){$a$}}
\put(10.7,1.5){\makebox(0,0){$b$}}
\put(8.25,1){\makebox(0,0){$T_{x}$}}
\end{picture}\caption{$G$ is a unicyclic non-K\"{o}nig-Egerv\'{a}ry graph with
$V(C)=\{y,d,t,c,w\}$.}%
\label{fig5333}%
\end{figure}

Unicyclic graphs keep enjoying plenty of interest, as one can see, for
instance, in \cite{Belardo2010}, \cite{Du2010}, \cite{Huo2010},
\cite{LevMan2009a}, \cite{Li2010}, \cite{Wu2010}, \cite{Zhai2010}.

In this paper we analyze the structure of $\mathrm{core}(G)$ for a unicyclic
graph $G$.

\section{Results}

If $G$ is a unicyclic graph, then there is an edge $e\in E\left(  C\right)  $,
such that $\mu(G-e)=\mu(G)$, because for each pair of edges, consecutive on
$C$, at most one could be $\mu$-critical. Let us mention that $\alpha
(G)\leq\alpha(G-e)\leq\alpha(G)+1$ holds for each edge $e\in E\left(
G\right)  $. Every edge of the unique cycle could be $\alpha$-critical; e.g.,
the graph $G$ from Figure \ref{fig5333}, which has also additional $\alpha
$-critical edges (e.g., the edge $uv$).

Let us notice that the bipartite graph $T_{x}$ from Figure \ref{fig5333} has
only two maximum matchings, namely, $M_{1}=\left\{  ax,uv\right\}  $ and
$M_{1}=\left\{  bx,uv\right\}  $, while each vertex of $\mathrm{core}%
(T_{x})=\{a,b\}$ is not saturated by one of these matchings.

\begin{lemma}
\label{lem0}For every bipartite graph $G$, a vertex $v\in\mathrm{core}(G)$ if
and only if there exists a maximum matching that does not saturate $v$.
\end{lemma}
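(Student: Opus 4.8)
The plan is to reduce the statement to a clean structural fact about bipartite graphs and their maximum matchings, using the König property together with Theorem~\ref{th1}. First I would prove the easy direction: if $v\in\mathrm{core}(G)$, then $v$ lies outside every maximum matching of $G$. Since $G$ is bipartite it is a König–Egerváry graph, so by Theorem~\ref{th1}(i) every maximum matching matches $N(\mathrm{core}(G))$ \emph{into} $\mathrm{core}(G)$; hence a maximum matching $M$ uses $|N(\mathrm{core}(G))|$ edges each of which has exactly one endpoint in $\mathrm{core}(G)$, and these are the only $M$-edges meeting $\mathrm{core}(G)$. Because $\alpha(G)+\mu(G)=|V(G)|$ and $\mathrm{core}(G)$ together with $V(G)\setminus N[\mathrm{core}(G)]$ accounts for the independence side of the count, a counting argument (or directly Theorem~\ref{th1}(ii), which says $G-N[\mathrm{core}(G)]$ has a perfect matching absorbable into $M$) forces $M$ to miss exactly $\alpha(G)-|N(\mathrm{core}(G))|$ vertices of $\mathrm{core}(G)$; as long as this quantity is positive, at least one vertex of $\mathrm{core}(G)$ is unsaturated. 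To get that a \emph{given} $v\in\mathrm{core}(G)$ is the unsaturated one, I would start from any maximum matching and, if $v$ is saturated, follow an alternating path inside $\mathrm{core}(G)\cup N(\mathrm{core}(G))$ from $v$ to one of the unsaturated core vertices and switch along it; the switch keeps the matching maximum and now leaves $v$ exposed.

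For the converse, suppose some maximum matching $M$ does not saturate $v$; I must show $v$ belongs to every maximum independent set. The key tool is the relationship between exposed vertices and $\alpha$-critical edges, or more directly the following: in a König–Egerváry graph, if $v$ is exposed by some maximum matching $M$, then $v$ can be added to an independent set drawn from the $M$-exposed vertices and the appropriate side of each $M$-edge, and one shows this forces $v$ into $\mathrm{core}(G)$. Concretely, I would argue by contradiction: if $v\notin\mathrm{core}(G)$, pick $S\in\Omega(G)$ with $v\notin S$. Then $v$ has a neighbor in $S$ (else $S\cup\{v\}$ would be independent, contradicting maximality since $v\notin S$ means... — careful: $v\notin S$ and $S$ maximum forces $N(v)\cap S\neq\emptyset$). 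Now combine the maximum matching $M$ that exposes $v$ with the structure of $S$: in a König–Egerváry graph, $V(G)\setminus S$ is a minimum vertex cover of size $\mu(G)$, and every $M$-edge has exactly one endpoint in $V(G)\setminus S$, i.e., $M$ saturates all of $V(G)\setminus S$ and exactly $|S|-(\alpha(G)-\mu(G))$... I would instead use the cleaner observation that $M$ restricted to edges gives a bijection pairing $V(G)\setminus S$ with a subset of $S$, and the $M$-exposed vertices all lie in $S$; since $v$ is $M$-exposed, $v\in S$, contradicting $v\notin S$.

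Actually the cleanest route to the converse is exactly that last line: \textbf{in a König–Egerváry graph, every maximum matching saturates the complement of every maximum independent set.} Indeed $V\setminus S$ is a minimum vertex cover with $|V\setminus S|=\mu(G)$, and each of the $\mu(G)$ edges of a maximum matching $M$ must be covered by $V\setminus S$, so $M$ injects its edges into $V\setminus S$; by cardinality this injection is a bijection, meaning $M$ saturates all of $V\setminus S$. Consequently every $M$-exposed vertex lies in $S$. Applying this to our $v$ and an arbitrary $S\in\Omega(G)$ gives $v\in S$ for all such $S$, i.e.\ $v\in\mathrm{core}(G)$. So for the converse I only need this vertex-cover counting, which holds for all König–Egerváry graphs, in particular all bipartite graphs.

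The main obstacle is the forward direction, specifically producing a maximum matching that exposes the \emph{prescribed} core vertex $v$ rather than merely \emph{some} core vertex. I expect to handle this with an alternating-path exchange argument: from Theorem~\ref{th1}(i) the $M$-edges meeting $\mathrm{core}(G)$ form a matching of $N(\mathrm{core}(G))$ into $\mathrm{core}(G)$, so the subgraph induced on $\mathrm{core}(G)\cup N(\mathrm{core}(G))$ has more vertices on the $\mathrm{core}(G)$ side than are matched; standard König–Egerváry / Hall-type reasoning then yields an $M$-alternating path from any chosen $v\in\mathrm{core}(G)$ to an $M$-exposed vertex of $\mathrm{core}(G)$, and swapping along it relocates the exposed vertex to $v$ while preserving $|M|$. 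One must check the swap stays within a maximum matching, which is automatic since alternating paths between exposed/saturated endpoints preserve matching size, and check $v$ is genuinely in $\mathrm{core}(G)$ throughout — but that is the hypothesis. With both directions in hand the lemma follows. \hfill$\blacksquare$
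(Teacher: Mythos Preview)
Your converse direction is correct and in fact cleaner than the paper's: you use only that $V\setminus S$ is a minimum vertex cover of size $\mu(G)$ together with a pigeonhole count to show every maximum matching saturates all of $V\setminus S$, whence every exposed vertex lies in every maximum independent set. The paper instead invokes both parts of Theorem~\ref{th1} to argue that if $v\notin N[\mathrm{core}(G)]$ then $v$ is saturated by the perfect-matching piece of $G-N[\mathrm{core}(G)]$. Your route is more elementary and, as you observe, valid for arbitrary K\"onig--Egerv\'ary graphs.

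Your forward direction, however, has a genuine gap. You reduce to finding an $M$-alternating path from an exposed core vertex to the prescribed $v$ and assert this follows from ``standard K\"onig--Egerv\'ary / Hall-type reasoning.'' But in a bipartite graph the existence of such a path is \emph{equivalent} to the statement that some maximum matching misses $v$---which is precisely what you are trying to prove---so the step is circular unless you supply an independent argument. Note also that your outline never uses bipartiteness beyond the K\"onig--Egerv\'ary property, yet the Remark following the lemma shows the forward direction \emph{fails} for non-bipartite K\"onig--Egerv\'ary graphs (the vertex $c$ in Figure~\ref{fig112} lies in $\mathrm{core}(G)$ but is saturated by every maximum matching). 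So bipartiteness must enter at a point you have not made explicit. The paper bypasses all of this in two lines: from $v\in\mathrm{core}(G)$ one has $\alpha(G-v)=\alpha(G)-1$, and since $G-v$ is again bipartite (hence K\"onig--Egerv\'ary),
\[
\mu(G-v)=|V(G-v)|-\alpha(G-v)=(n-1)-(\alpha(G)-1)=n-\alpha(G)=\mu(G),
\]
so any maximum matching of $G-v$ is a maximum matching of $G$ that leaves $v$ exposed.
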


\begin{proof}
Since $v\in\mathrm{core}(G)$, it follows that $\alpha(G-v)=\alpha(G)-1$.
Consequently, we have
\[
\alpha(G)+\mu(G)-1=\left\vert V(G)\right\vert -1=\left\vert V(G-v)\right\vert
=\alpha(G-v)+\mu(G-v)
\]
which implies that $\mu(G)=\mu(G-v)$. In other words, there is a maximum
matching in $G$ not saturating $v$.

Conversely, suppose that there exists a maximum matching in $G$ that does not
saturate $v$. Since, by Theorem \ref{th1}\emph{(i)}, $N($\textrm{core}$(G))$
is matched into \textrm{core}$(G)$ by every maximum matching, it follows that
$v\notin N($\textrm{core}$(G))$.

Assume that $v\notin$ \textrm{core}$(G)$. By Theorem \ref{th1}\emph{(ii)},
every maximum matching $M$ of $G$ is of the form $M=M_{1}\cup M_{2}$, where
$M_{1}$ matches $N($\textrm{core}$(G))$ into \textrm{core}$(G)$, while $M_{2}$
is a perfect matching of $G-N\left[  \mathrm{core}(G)\right]  $. Thus $v$ is
saturated by every maximum matching of $G$, in contradiction with the
hypothesis on $v$.
\end{proof}

\begin{remark}
Lemma \ref{lem0} fails for non-bipartite K\"{o}nig-Egerv\'{a}ry graphs; e.g.,
every maximum matching of the graph $G$ from Figure \ref{fig112} saturates
$c\in$ \textrm{core}$(G)=\{a,b,c\}$.
\end{remark}

\begin{lemma}
\label{lem2}If $G$ is a unicyclic graph of order $n$, then $n-1\leq
\alpha(G)+\mu(G)\leq n$.
\end{lemma}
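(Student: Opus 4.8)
The plan is to treat the two inequalities separately. The upper bound $\alpha(G)+\mu(G)\le n$ is the general bound valid for every graph on $n$ vertices, already recalled in the Introduction, so nothing new is required there.

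For the lower bound I would exploit the unique cycle $C$. As observed at the beginning of this section, among the edges of $C$ there is one, say $e$, with $\mu(G-e)=\mu(G)$, since two edges consecutive on a cycle cannot both be $\mu$-critical. Deleting such an $e$ from the connected unicyclic graph $G$ destroys its only cycle while keeping the graph connected, so $T:=G-e$ is a spanning tree of $G$ on all $n$ vertices. Being a tree, $T$ is bipartite, hence a K\"{o}nig-Egerv\'{a}ry graph, so $\alpha(T)+\mu(T)=n$. Since $\mu(T)=\mu(G-e)=\mu(G)$ by the choice of $e$, this yields $\alpha(G-e)=\alpha(T)=n-\mu(G)$.

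Finally, using the elementary inequality $\alpha(G-e)\le\alpha(G)+1$ recalled in the Introduction, we get $\alpha(G)\ge\alpha(G-e)-1=n-\mu(G)-1$, that is, $\alpha(G)+\mu(G)\ge n-1$, as claimed.

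There is essentially no deep obstacle here; the one point that must be noticed is that $e$ has to be chosen on the cycle so that simultaneously ``$G-e$ is a tree'' and ``$\mu(G-e)=\mu(G)$'' — the latter being exactly the remark about $\mu$-critical edges stated just above. One could alternatively argue directly by exhibiting an independent set of size at least $n-1-\mu(G)$ alongside a maximum matching, but routing the proof through the fact that trees (being bipartite) are K\"{o}nig-Egerv\'{a}ry graphs is cleaner and reuses machinery already in place.
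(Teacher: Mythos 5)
Your proof is correct and follows essentially the same route as the paper: delete an edge of the cycle to obtain a spanning tree, use $\alpha+\mu=n$ for trees, and transfer the bound back via $\alpha(G-e)\le\alpha(G)+1$. The only (harmless) difference is that you insist on choosing $e$ with $\mu(G-e)=\mu(G)$, which is unnecessary here — the paper takes an arbitrary $e\in E(C)$ and simply uses $\mu(G-e)\le\mu(G)$.
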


\begin{proof}
If $e=xy\in E(C)$, then $G-e$ is a tree, because $G$ is connected. Hence,
$\alpha(G-e)+\mu(G-e)=n$. Clearly, $\alpha(G-e)\leq\alpha(G)+1$, while
$\mu(G-e)\leq\mu(G)$. Consequently, we get that
\[
n=\alpha(G-e)+\mu(G-e)\leq\alpha(G)+\mu(G)+1,
\]
which leads to $n-1\leq\alpha(G)+\mu(G)$. The inequality $\alpha(G)+\mu(G)\leq
n$ is true for every graph $G$.
\end{proof}

\begin{remark}
If $G$ has $n$ vertices, $p$ connected components, say $H_{i},1\leq i\leq p$,
and each component contains only one cycle, then one can easily see that
$n-p\leq\alpha(G)+\mu(G)\leq n$, because $\alpha(G)=\sum\limits_{i=1}%
^{p}\alpha(H_{i})$ and $\mu(G)=\sum\limits_{i=1}^{p}\mu(H_{i})$.
\end{remark}

While $C_{2k}$, $k\geq2$, has no $\alpha$-critical edge at all, each edge of
every odd cycle $C_{2k-1}$, $k\geq2$, is $\alpha$-critical. This property is
partially inherited by unicyclic graphs.

\begin{lemma}
\label{lem6}Let $G$ be a unicyclic graph of order $n$. Then $n-1=\alpha
(G)+\mu(G)$ if and only if each edge of its unique cycle is $\alpha$-critical.
\end{lemma}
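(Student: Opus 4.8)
The plan is to prove the two implications separately, the common tool being that removing any edge $e$ of the unique cycle $C$ produces a tree $G-e$ on the same vertex set; since trees are König-Egerváry graphs, $\alpha(G-e)+\mu(G-e)=\left\vert V(G)\right\vert =n$. Alongside this I will use the universal inequalities $\alpha(G)\leq\alpha(G-e)\leq\alpha(G)+1$ and $\mu(G-e)\leq\mu(G)$, valid for every edge of any graph.

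For the direction ``$n-1=\alpha(G)+\mu(G)$ forces every edge of $C$ to be $\alpha$-critical'', I would fix an arbitrary $e\in E(C)$. Because $G-e$ is a tree, $n=\alpha(G-e)+\mu(G-e)\leq\left(\alpha(G)+1\right)+\mu(G)=n$, so equality holds throughout; in particular $\alpha(G-e)=\alpha(G)+1>\alpha(G)$, i.e.\ $e$ is $\alpha$-critical. Since $e$ was an arbitrary cycle edge, every edge of $C$ is $\alpha$-critical. This half is essentially immediate and applies uniformly to all cycle edges at once.

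For the converse I would argue by contradiction: assume every edge of $C$ is $\alpha$-critical but $\alpha(G)+\mu(G)\neq n-1$. By Lemma \ref{lem2} the only remaining possibility is $\alpha(G)+\mu(G)=n$, i.e.\ $G$ is a König-Egerváry graph, whence $\alpha(G)=n-\mu(G)$. Here I invoke the observation recorded at the opening of this section: no two consecutive edges of $C$ can both be $\mu$-critical, so there is an edge $e\in E(C)$ with $\mu(G-e)=\mu(G)$. For that particular $e$, the tree $G-e$ satisfies $\alpha(G-e)=n-\mu(G-e)=n-\mu(G)=\alpha(G)$, so $e$ is not $\alpha$-critical, contradicting the hypothesis. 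Hence $\alpha(G)+\mu(G)=n-1$.

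The only step requiring any attention is the selection, in the converse direction, of the cycle edge $e$ that is not $\mu$-critical — but that is handed to us by the incidence argument already quoted in the text (two $\mu$-critical edges cannot share a vertex, hence cannot be consecutive on $C$). The forward implication and the König-Egerváry identity for trees are routine, so I do not anticipate a genuine obstacle here, only the bookkeeping of which edge to delete in each direction.
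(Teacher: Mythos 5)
Your proof is correct and follows essentially the same route as the paper: both directions hinge on $G-e$ being a tree (hence $\alpha(G-e)+\mu(G-e)=n$) together with the existence of a cycle edge that is not $\mu$-critical, and your forward direction is the paper's computation verbatim in inequality form. The only cosmetic difference is that you run the converse by contradiction via Lemma \ref{lem2}, whereas the paper computes $n-1=\alpha(G-e)+\mu(G-e)-1=\alpha(G)+\mu(G)$ directly from the chosen non-$\mu$-critical, $\alpha$-critical edge.
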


\begin{proof}
Assume that $n-1=\alpha(G)+\mu(G)$. Since $G$\ is connected, for each $e\in
E(C)$ the graph $G-e$ is a tree. Hence, we have
\[
\alpha(G-e)-\alpha(G)+\mu(G-e)-\mu(G)=1,
\]
which implies $\mu(G-e)=\mu(G)$ and $\alpha(G-e)=\alpha(G)+1$, since%
\[
-1\leq\mu(G-e)-\mu(G)\leq0\leq\alpha(G-e)-\alpha(G)\leq1.
\]
In other words, every $e\in E(C)$ is $\alpha$-critical.

Conversely, let $e\in E\left(  C\right)  $ be such that $\mu(G-e)=\mu(G)$;
such an edge exists, because no two consecutive edges on $C$ could be $\mu
$-critical. Since $e$ is $\alpha$-critical, and $G-e$ is a tree, we infer
that
\[
n-1=\alpha(G-e)+\mu(G-e)-1=\alpha(G)+\mu(G),
\]
and this completes the proof.
\end{proof}

Combining Lemma \ref{lem6} and Theorem \ref{th2}, we infer the following.

\begin{corollary}
\label{cor1}If $G$ is a unicyclic non-K\"{o}nig-Egerv\'{a}ry graph, then no
vertex of its unique cycle belongs to $N[\mathrm{core}(G)]$.
\end{corollary}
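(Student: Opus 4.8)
The plan is to deduce this quickly by chaining the two preceding results. First I would observe that the hypothesis ``$G$ is a unicyclic non-K\"{o}nig-Egerv\'{a}ry graph'' means precisely that $\alpha(G)+\mu(G)\neq n$; combined with Lemma \ref{lem2}, which forces $n-1\leq\alpha(G)+\mu(G)\leq n$, this pins the value down to $\alpha(G)+\mu(G)=n-1$. So the non-K\"{o}nig-Egerv\'{a}ry hypothesis is equivalent, for a unicyclic graph, to the ``deficiency one'' condition that appears in Lemma \ref{lem6}.

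Next I would invoke Lemma \ref{lem6}: since $n-1=\alpha(G)+\mu(G)$, every edge of the unique cycle $C$ is $\alpha$-critical. Now apply Theorem \ref{th2}, which says that no $\alpha$-critical edge has an endpoint in $N[\mathrm{core}(G)]$. Thus no edge of $C$ has an endpoint in $N[\mathrm{core}(G)]$, and since every vertex of $C$ is an endpoint of some (indeed two) edge of $C$, we conclude $V(C)\cap N[\mathrm{core}(G)]=\emptyset$, which is exactly the assertion.

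There is essentially no obstacle here — it is a direct corollary, as the text announces (``Combining Lemma \ref{lem6} and Theorem \ref{th2}''). The only point requiring a moment's care is the first step, namely recognizing that for a \emph{unicyclic} graph the trichotomy from Lemma \ref{lem2} collapses the negation of ``K\"{o}nig-Egerv\'{a}ry'' to the single alternative $\alpha(G)+\mu(G)=n-1$; after that, Lemma \ref{lem6} and Theorem \ref{th2} fit together mechanically. I would write it as follows.

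\begin{proof}
Since $G$ is not a K\"{o}nig-Egerv\'{a}ry graph, we have $\alpha(G)+\mu(G)\neq n$. By Lemma \ref{lem2}, $n-1\leq\alpha(G)+\mu(G)\leq n$, and hence $n-1=\alpha(G)+\mu(G)$. Lemma \ref{lem6} then implies that every edge of the unique cycle $C$ is $\alpha$-critical. By Theorem \ref{th2}, no $\alpha$-critical edge has an endpoint in $N[\mathrm{core}(G)]$; in particular, no edge of $C$ has an endpoint in $N[\mathrm{core}(G)]$. Since each vertex of $C$ is an endpoint of some edge of $C$, it follows that $V(C)\cap N[\mathrm{core}(G)]=\emptyset$.
\end{proof}
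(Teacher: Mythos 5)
Your proof is correct and is exactly the argument the paper intends: it uses Lemma \ref{lem2} to collapse the non-K\"{o}nig-Egerv\'{a}ry hypothesis to $\alpha(G)+\mu(G)=n-1$, then combines Lemma \ref{lem6} with Theorem \ref{th2}, just as the paper's one-line derivation indicates. No issues.
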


\begin{remark}
Corollary \ref{cor1} is true also for some unicyclic K\"{o}nig-Egerv\'{a}ry
graphs; e.g., the graph $H_{1}$ from Figure \ref{fig1123}. However, the
K\"{o}nig-Egerv\'{a}ry graph $H_{2}$ from the same figure satisfies
$N[\mathrm{core}(H_{2})]\cap V\left(  C\right)  =\left\{  u\right\}
\neq\emptyset$.
\end{remark}

\begin{figure}[h]
\setlength{\unitlength}{1cm}\begin{picture}(5,1.3)\thicklines
\multiput(2,0)(1,0){5}{\circle*{0.29}}
\multiput(3,1)(1,0){4}{\circle*{0.29}}
\put(2,0){\line(1,0){4}}
\put(3,0){\line(0,1){1}}
\put(4,0){\line(0,1){1}}
\put(4,1){\line(1,0){1}}
\put(5,0){\line(0,1){1}}
\put(6,0){\line(0,1){1}}
\put(2,0.3){\makebox(0,0){$a$}}
\put(3.3,1){\makebox(0,0){$b$}}
\put(3.3,0.3){\makebox(0,0){$c$}}
\put(6.3,0.3){\makebox(0,0){$d$}}
\put(1.1,0.5){\makebox(0,0){$H_{1}$}}
\multiput(8,0)(1,0){5}{\circle*{0.29}}
\multiput(8,1)(1,0){4}{\circle*{0.29}}
\put(8,0){\line(1,0){4}}
\put(8,0){\line(0,1){1}}
\put(8,1){\line(1,0){1}}
\put(10,0){\line(-1,1){1}}
\put(10,0){\line(0,1){1}}
\put(11,0){\line(0,1){1}}
\put(10.3,1){\makebox(0,0){$x$}}
\put(11.3,1){\makebox(0,0){$y$}}
\put(12,0.3){\makebox(0,0){$z$}}
\put(10.3,0.3){\makebox(0,0){$u$}}
\put(11.3,0.3){\makebox(0,0){$v$}}
\put(7.1,0.5){\makebox(0,0){$H_{2}$}}
\end{picture}\caption{$H_{1}$ and $H_{2}$ have $N\left[  \mathrm{core}%
(H_{1})\right]  =\left\{  a,b,c\right\}  $, $N\left[  \mathrm{core}%
(H_{1})\right]  =\left\{  x,y,z,u,v\right\}  $.}%
\label{fig1123}%
\end{figure}

\begin{lemma}
\label{lem5}Let $G$ be a unicyclic graph of order $n$. If there exists some
$x\in N_{1}(C)$, such that $x\in\mathrm{core}(T_{x})$, then $G$ is a
K\"{o}nig-Egerv\'{a}ry graph.
\end{lemma}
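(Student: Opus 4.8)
The plan is to show directly that $\alpha(G)+\mu(G)=n$; since $\alpha(G)+\mu(G)\le n$ holds for every graph, the whole burden is the reverse inequality. The natural move is to cut $G$ along the edge $xy$ (here $x\in N_{1}(C)$ and $y\in V(C)\cap N(x)$). This detaches the tree $T_{x}=G[V_{x}]$ from the graph $G-V_{x}$, and one checks the expected structural facts: $xy$ is the only edge of $G$ joining $V_{x}$ to $V(G)-V_{x}$, and $G-V_{x}$ is again unicyclic with the same cycle $C$. I would then also delete the vertex $y$ from the picture, because $G-V_{x}-y$ is obtained from the unicyclic graph $G-V_{x}$ by removing a vertex of its cycle, hence is a \emph{forest}.

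Two K\"{o}nig-Egerv\'{a}ry identities are then available for free: $\alpha(T_{x})+\mu(T_{x})=\left\vert V_{x}\right\vert $ (trees are bipartite) and $\alpha(G-V_{x}-y)+\mu(G-V_{x}-y)=n-\left\vert V_{x}\right\vert -1$ (forests are bipartite). For the independence number I would observe that a maximum independent set of $T_{x}$ together with a maximum independent set of $G-V_{x}-y$ is independent in $G$: the only cross edge is $xy$, and its endpoint $y$ has been deleted. Hence $\alpha(G)\ge\alpha(T_{x})+\alpha(G-V_{x}-y)$.

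The matching estimate is the place where the hypothesis $x\in\mathrm{core}(T_{x})$ enters, through Lemma \ref{lem0}: since $T_{x}$ is bipartite and $x$ lies in its core, $T_{x}$ has a maximum matching $M_{x}$ that does not saturate $x$. Then $M_{x}\cup\{xy\}\cup M_{0}$, where $M_{0}$ is a maximum matching of $G-V_{x}-y$, is a matching of $G$, since its three blocks sit on pairwise disjoint vertex sets; therefore $\mu(G)\ge\mu(T_{x})+1+\mu(G-V_{x}-y)$. Adding this to the independence estimate and inserting the two identities gives $\alpha(G)+\mu(G)\ge\left\vert V_{x}\right\vert +\bigl(n-\left\vert V_{x}\right\vert -1\bigr)+1=n$, as wanted.

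The only genuine obstacle is securing that extra $+1$ in the matching bound. Without the core hypothesis, every maximum matching of $T_{x}$ might saturate $x$, so appending $xy$ would force us to drop an edge and we would only get $\mu(G)\ge\mu(T_{x})+\mu(G-V_{x}-y)$ --- precisely one short. Lemma \ref{lem0} is the exact device that converts ``$x\in\mathrm{core}(T_{x})$'' into ``$x$ can be left unmatched inside $T_{x}$'', letting the cross edge $xy$ be used at no cost. I would finally glance at the degenerate configurations ($V_{x}=\{x\}$, or $G-V_{x}-y$ empty, or $C$ a triangle) and confirm that the counting identities still hold verbatim, so the argument goes through unchanged.
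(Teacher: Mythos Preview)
Your proof is correct. Both your argument and the paper's hinge on Lemma~\ref{lem0} to obtain a maximum matching $M_{x}$ of $T_{x}$ missing $x$, and then adjoin the edge $xy$ together with a maximum matching of $G-V_{x}-y$ (what the paper writes as $G-y-T_{x}$) to force $\mu(G)\ge\mu(T_{x})+1+\mu(G-V_{x}-y)=\mu(G-y)+1$. The difference is on the independence side. The paper argues by contradiction: assuming $\alpha(G)+\mu(G)=n-1$, it invokes Lemma~\ref{lem6} to make an edge $yz$ of $C$ $\alpha$-critical, deduces $y\notin\mathrm{core}(G)$, and hence $\alpha(G)=\alpha(G-y)$. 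You instead note directly that $\alpha(G)\ge\alpha(T_{x})+\alpha(G-V_{x}-y)$, which is nothing more than the trivial bound $\alpha(G)\ge\alpha(G-y)$ once one observes that $G-y$ is the disjoint union of $T_{x}$ and $G-V_{x}-y$. Your shortcut bypasses Lemma~\ref{lem6} and the $\alpha$-critical machinery entirely, giving a leaner direct proof; the paper's route, while marginally longer here, keeps the $\alpha$-critical viewpoint in play, which is the engine driving the subsequent Theorem~\ref{th22}.
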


\begin{proof}
Let $x\in\mathrm{core}(T_{x})$, $y\in N\left(  x\right)  \cap V(C)$, and $z\in
N\left(  y\right)  \cap V(C)$. Suppose, to the contrary, that $G$ is not a
K\"{o}nig-Egerv\'{a}ry graph. By\emph{ }Lemmas \ref{lem2} and \ref{lem6}, the
edge $yz$ is $\alpha$-critical. Hence $y\notin\mathrm{core}(G)$, which implies
that $\alpha(G)=\alpha(G-y)$. In accordance with Lemma \ref{lem0}, there
exists a maximum matching $M_{x}$ of $T_{x}$ not saturating $x$. Combining
$M_{x}$ with a maximum matching of $G-y-T_{x}$ we get a maximum matching
$M_{y}$ of $G-y$. Hence $M_{y}\cup\left\{  xy\right\}  $ is a matching of $G$,
which results in $\mu\left(  G\right)  \geq\mu\left(  G-y\right)  +1$.
Therefore, using Lemma \ref{lem2} and having in mind that $G-y$ is a forest of
order $n-1$, we get the following contradiction
\[
n-1=\alpha(G)+\mu\left(  G\right)  \geq\alpha(G-y)+\mu\left(  G-y\right)
+1=n-1+1=n,
\]
that completes the proof.
\end{proof}

\begin{remark}
The converse of Lemma \ref{lem5} is not generally true; e.g., the graph
$H_{1}$ from Figure \ref{fig1123} is a unicyclic K\"{o}nig-Egerv\'{a}ry graph,
while both $c\notin\mathrm{core}(T_{c})=\left\{  a,b\right\}  $, and
$d\notin\mathrm{core}(T_{d})=\emptyset$.
\end{remark}

\begin{theorem}
\label{th22}If $G$ is a unicyclic non-K\"{o}nig-Egerv\'{a}ry graph, then
\[
\mathrm{core}\left(  G\right)  =\cup\left\{  \mathrm{core}\left(
T_{x}\right)  :x\in N_{1}(C)\right\}  .
\]

\end{theorem}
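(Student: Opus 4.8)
The plan is to set $F=G-V(C)$ and reduce the theorem to the single identity $\mathrm{core}(G)=\mathrm{core}(F)$. First I would check that, since in a unicyclic graph every edge outside $C$ is a bridge, the connected components of $F$ are precisely the trees $T_{x}$, $x\in N_{1}(C)$ — if two distinct vertices of $N_{1}(C)$ lay in one component of $F$, a path between them together with two edges to $C$ and an arc of $C$ would create a second cycle. Because the core of a disjoint union is the union of the cores, this gives $\mathrm{core}(F)=\bigcup\{\mathrm{core}(T_{x}):x\in N_{1}(C)\}$, so it indeed suffices to prove $\mathrm{core}(G)=\mathrm{core}(F)$. I would then record three consequences of the hypotheses. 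By Lemmas \ref{lem2} and \ref{lem6}, $n-1=\alpha(G)+\mu(G)$ and every edge of $C$ is $\alpha$-critical. By Corollary \ref{cor1}, $V(C)\cap N[\mathrm{core}(G)]=\emptyset$; in particular $\mathrm{core}(G)\subseteq V(F)$. And, by the contrapositive of Lemma \ref{lem5}, $x\notin\mathrm{core}(T_{x})$ for every $x\in N_{1}(C)$, equivalently each $T_{x}$ has a maximum independent set missing its root $x$, i.e.\ $\alpha(T_{x}-x)=\alpha(T_{x})$.

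Next I would establish $\alpha(G)=\alpha(F)+\alpha(C)$. The inequality ``$\le$'' is immediate, since for $S\in\Omega(G)$ the sets $S\cap V(F)$ and $S\cap V(C)$ are independent in $F$ and $C$ respectively. For ``$\ge$'', the third fact above yields $\alpha(F-N_{1}(C))=\sum_{x}\alpha(T_{x}-x)=\sum_{x}\alpha(T_{x})=\alpha(F)$, so $F$ has a maximum independent set disjoint from $N_{1}(C)$; adjoining to it any maximum independent set of $C$ produces an independent set of $G$ of size $\alpha(F)+\alpha(C)$. As a consequence, every $S\in\Omega(G)$ is forced to satisfy $|S\cap V(F)|=\alpha(F)$ and $|S\cap V(C)|=\alpha(C)$, so $S\cap V(F)\in\Omega(F)$ and hence $S\supseteq\mathrm{core}(F)$; intersecting over $\Omega(G)$ gives $\mathrm{core}(F)\subseteq\mathrm{core}(G)$.

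For the reverse inclusion — the crux — I would fix $v\in V(F)\setminus\mathrm{core}(F)$ and manufacture some $S\in\Omega(G)$ with $v\notin S$. Let $T_{x_{0}}$ be the component of $F$ containing $v$ and let $y_{0}$ be the neighbour of $x_{0}$ on $C$. Since $v\notin\mathrm{core}(T_{x_{0}})$, fix a maximum independent set $S_{0}$ of $T_{x_{0}}$ with $v\notin S_{0}$; and for every other $x\in N_{1}(C)$ fix, using the third fact, a maximum independent set $S_{x}$ of $T_{x}$ with $x\notin S_{x}$. Then $S_{F}=S_{0}\cup\bigcup_{x\neq x_{0}}S_{x}$ is a maximum independent set of $F$ missing $v$, and the only vertex of $C$ that can have a neighbour in $S_{F}$ is $y_{0}$. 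Since deleting one vertex from a cycle does not lower its independence number, $C$ has a maximum independent set $S_{C}$ avoiding $y_{0}$; then $S=S_{F}\cup S_{C}$ is independent in $G$ of cardinality $\alpha(F)+\alpha(C)=\alpha(G)$, hence $S\in\Omega(G)$ and $v\notin S$. Thus $\mathrm{core}(G)\cap V(F)\subseteq\mathrm{core}(F)$, and combined with $\mathrm{core}(G)\subseteq V(F)$ from the first paragraph this gives $\mathrm{core}(G)=\mathrm{core}(F)$, as desired.

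The delicate point — and the step I expect to need the most care — is the selection in the last paragraph: one cannot in general find a maximum independent set of $F$ avoiding $v$ \emph{and} all of $N_{1}(C)$, because once $v$ is forbidden the root $x_{0}$ may be forced into every maximum independent set of $T_{x_{0}}$, so it would be wrong to try to leave $C$ entirely unblocked. The correct move is to block at most one vertex of $C$, which any cycle can always afford, while Lemma \ref{lem5} is precisely what keeps every other tree from blocking anything. A minor matter is the identification of the components of $G-V(C)$ with the $T_{x}$'s (and the degenerate case $G=C$, where both sides are empty by Corollary \ref{cor1}).
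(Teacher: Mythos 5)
Your proof is correct, and it takes a recognizably different route from the paper's. The paper works tree by tree: invoking the $\alpha$-criticality of every edge $yz$ of $C$ (Lemma \ref{lem6}), it produces maximum independent sets $S_{y}\in\Omega(G)$ and $S_{yz}\in\Omega(G-yz)$ containing prescribed cycle vertices, and uses them to show that \emph{every} $A\in\Omega(T_{x})$ extends to a member of $\Omega(G)$ (its Claim 1, with a two-case exchange argument) and that every $S\in\Omega(G)$ restricts to a member of $\Omega(T_{x})$ (its Claim 2), whence $\mathrm{core}(T_{x})=\mathrm{core}(G)\cap V(T_{x})$. You instead cut along $V(C)$ all at once and hang everything on the single identity $\alpha(G)=\alpha(F)+\alpha(C)$ for $F=G-V(C)$, extracted from the contrapositive of Lemma \ref{lem5} (each $T_{x}$ admits a maximum independent set missing its root). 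That identity yields the inclusion $\mathrm{core}(F)\subseteq\mathrm{core}(G)$ by pure counting, with no exchange argument, and for the converse you build only \emph{one} maximum independent set of $G$ avoiding a given $v\notin\mathrm{core}(F)$ --- weaker than the paper's Claim 1 but all the theorem needs --- by blocking at most the single cycle vertex $y_{0}$ and using $\alpha(C-y_{0})=\alpha(C)$. What each approach buys: the paper's version establishes the stronger extension property for all maximum independent sets of the trees, while yours is leaner; in particular your construction of $S_{F}\cup S_{C}$ with $S_{C}$ avoiding an arbitrary cycle vertex re-derives $\mathrm{core}(G)\cap V(C)=\emptyset$ directly, so the appeal to Corollary \ref{cor1} (and hence to Theorem \ref{th2}) could be dispensed with. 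Both arguments rest on Lemma \ref{lem5} in the same essential way, and your identification of the components of $G-V(C)$ with the trees $T_{x}$ is the same routine unicyclicity check (note only that two vertices of $N_{1}(C)$ attached to the \emph{same} cycle vertex would also force a second cycle, without needing an arc of $C$).
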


\begin{proof}
\textit{Claim 1}. Every maximum independent set of $T_{x}$ may be enlarged to
some maximum independent set of $G$, for each $x\in N_{1}(C)$.

Let $A\in\Omega(T_{x})$, $y\in N\left(  x\right)  \cap V(C)$, and $z\in
N\left(  y\right)  \cap V(C)$. According to Lemma \ref{lem6}, the edge $yz$ is
$\alpha$-critical. Hence there exist $S_{y}\in\Omega(G)$, $S_{yz}\in$
$\Omega(G-yz)$, such that $y\in S_{y}$ and $y,z\in S_{yz}$.

\textit{Case 1}. Assume that $x\notin A$.

If $\left\vert S_{y}-V(T_{x})\right\vert <\alpha(G-T_{x})=\left\vert
S_{0}\right\vert $, where $S_{0}\in\Omega\left(  G-T_{x}\right)  $, then the
set $S_{1}=S_{0}\cup\left(  S_{y}\cap V(T_{x})\right)  $ is independent in
$G$, and we get the contradiction
\[
\alpha\left(  G\right)  =\left\vert S_{y}-V(T_{x})\right\vert +\left\vert
S_{y}\cap V(T_{x})\right\vert <\left\vert S_{0}\right\vert +\left\vert
S_{y}\cap V(T_{x})\right\vert =\left\vert S_{1}\right\vert .
\]
Therefore, we have $\left\vert S_{y}-V(T_{x})\right\vert =\alpha(G-T_{x})$.
Then $A\cup\left(  S_{y}-V\left(  T_{x}\right)  \right)  \in\Omega(G)$,
otherwise we obtain the following contradiction
\[
\left\vert S_{y}-V(T_{x})\right\vert +\left\vert A\right\vert <\alpha
(G)\leq\alpha(G-T_{x})+\alpha(T_{x})=\left\vert S_{y}-V(T_{x})\right\vert
+\left\vert A\right\vert .
\]

\textit{Case 2}. Assume now that $x\in A$.

Then we have $\left\vert A\right\vert \geq\left\vert S_{yz}\cap V\left(
T_{x}\right)  \right\vert $, because $S_{yz}\cap V\left(  T_{x}\right)  $ is
independent in $T_{x}$. Hence we infer%
\begin{gather*}
\alpha\left(  G\right)  =\left\vert S_{yz}-\left\{  y\right\}  \right\vert
\leq\left\vert \left(  S_{yz}-\left\{  y\right\}  -\left(  S_{yz}\cap V\left(
T_{x}\right)  \right)  \right)  \cup A\right\vert =\\
=\left\vert \left(  S_{yz}-\left\{  y\right\}  -V\left(  T_{x}\right)
\right)  \cup A\right\vert .
\end{gather*}

Since $W=\left(  S_{yz}-\left\{  y\right\}  -V\left(  T_{x}\right)  \right)
\cup A$ is independent and its size is $\alpha\left(  G\right)  $ at least, it
follows that $W$\ is also a maximum independent set, i.e., we have $A\subseteq
W\in\Omega(G)$, as needed.

\textit{Claim 2}. $S\cap V\left(  T_{x}\right)  \in\Omega\left(  T_{x}\right)
$ for every $S\in\Omega\left(  G\right)  $ and each $x\in N_{1}(C)$.

Let $S\in\Omega\left(  G\right)  $, and suppose, to the contrary, that
$A=S\cap V\left(  T_{x}\right)  \notin\Omega\left(  T_{x}\right)  $. By Lemma
\ref{lem5}\emph{, }$x\notin\mathrm{core}(T_{x})$. Thus we can change $A$ for
some $B\in\Omega\left(  T_{x}\right)  $ not containing $x$. The set $\left(
S-A\right)  \cup B$ is clearly independent in $G$, and this leads to the
contradiction $\left\vert \left(  S-A\right)  \cup B\right\vert =\left\vert
S-A\right\vert +\left\vert B\right\vert >\left\vert S\right\vert =\alpha(G)$.

Combining \textit{Claims 1}\emph{ }and\emph{ }\textit{2}, we infer that:%
\begin{align*}
\mathrm{core}\left(  T_{x}\right)   &  =\cap\{A:A\in\Omega(T_{x}%
)\}=\cap\{S\cap V\left(  T_{x}\right)  :S\in\Omega(G)\}\\
&  =(\cap\{S:S\in\Omega(G)\})\cap V\left(  T_{x}\right)  =\mathrm{core}\left(
G\right)  \cap V\left(  T_{x}\right)  ,
\end{align*}
which clearly implies
\[
\mathrm{core}\left(  G\right)  =\cup\left\{  \mathrm{core}\left(
T_{x}\right)  :x\in N(V(C))-V(C)\right\}
\]
as required.
\end{proof}

\begin{remark}
The assertion in Theorem \ref{th22} may fail for:

\emph{(i)} bipartite unicyclic graphs; for example, the graphs $H_{1}$,
$H_{2}$ from Figure \ref{fig1} satisfy
\begin{align*}
\mathrm{core}\left(  H_{1}\right)   &  =\cup\left\{  \mathrm{core}\left(
T_{x}\right)  :x\in N_{1}(C)\right\}  \text{, and }\\
\mathrm{core}\left(  H_{2}\right)   &  \neq\left\{  x,z\right\}  =\cup\left\{
\mathrm{core}\left(  T_{x}\right)  :x\in N_{1}(C)\right\}  ;
\end{align*}
\begin{figure}[h]
\setlength{\unitlength}{1cm}\begin{picture}(5,1.3)\thicklines
\multiput(3,0)(1,0){5}{\circle*{0.29}}
\multiput(4,1)(1,0){4}{\circle*{0.29}}
\put(3,0){\line(1,0){4}}
\put(4,0){\line(0,1){1}}
\put(5,0){\line(0,1){1}}
\put(5,1){\line(1,0){1}}
\put(6,0){\line(0,1){1}}
\put(7,0){\line(0,1){1}}
\put(3,0.3){\makebox(0,0){$a$}}
\put(4.3,1){\makebox(0,0){$b$}}
\put(2.2,0.5){\makebox(0,0){$H_{1}$}}
\multiput(9,0)(1,0){4}{\circle*{0.29}}
\multiput(9,1)(1,0){4}{\circle*{0.29}}
\put(9,0){\line(1,0){2}}
\put(9,1){\line(1,-1){1}}
\put(10,0){\line(0,1){1}}
\put(10,1){\line(1,0){2}}
\put(11,0){\line(0,1){1}}
\put(12,0){\line(0,1){1}}
\put(9,0.3){\makebox(0,0){$x$}}
\put(9.3,1){\makebox(0,0){$y$}}
\put(10.3,0.8){\makebox(0,0){$z$}}
\put(11.3,0){\makebox(0,0){$t$}}
\put(8.2,0.5){\makebox(0,0){$H_{2}$}}
\end{picture}\caption{$H_{1},H_{2}$ are bipartite unicyclic graphs,
\textrm{core}$(H_{1})=\left\{  a,b\right\}  $, \textrm{core}$(H_{2})=\left\{
t,x,y,z\right\}  $.}%
\label{fig1}%
\end{figure}
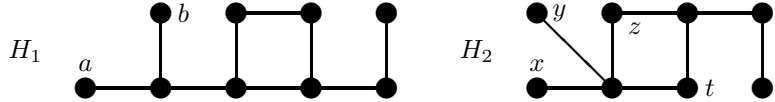

\emph{(ii)} non-bipartite K\"{o}nig-Egerv\'{a}ry unicyclic graphs; for
instance,
\begin{align*}
\mathrm{core}\left(  G_{2}\right)   &  \neq\left\{  t,z\right\}  =\cup\left\{
\mathrm{core}\left(  T_{x}\right)  :x\in N_{1}(C)\right\}  \text{, while }\\
\mathrm{core}\left(  G_{1}\right)   &  =\cup\left\{  \mathrm{core}\left(
T_{x}\right)  :x\in N_{1}(C)\right\}  ,
\end{align*}
where $G_{1}$ and $G_{2}$ are from Figure \ref{fig11222}.\begin{figure}[h]
\setlength{\unitlength}{1cm}\begin{picture}(5,1.2)\thicklines
\multiput(2,0)(1,0){6}{\circle*{0.29}}
\multiput(3,1)(1,0){3}{\circle*{0.29}}
\put(2,0){\line(1,0){5}}
\put(3,0){\line(0,1){1}}
\put(4,0){\line(0,1){1}}
\put(4,1){\line(1,0){1}}
\put(5,1){\line(1,-1){1}}
\put(2,0.3){\makebox(0,0){$a$}}
\put(3.3,1){\makebox(0,0){$b$}}
\put(7,0.3){\makebox(0,0){$c$}}
\put(1.2,0.5){\makebox(0,0){$G_{1}$}}
\multiput(9,0)(1,0){5}{\circle*{0.29}}
\multiput(11,1)(1,0){2}{\circle*{0.29}}
\put(9,0){\line(1,0){4}}
\put(10,0){\line(1,1){1}}
\put(11,1){\line(1,0){1}}
\put(12,0){\line(0,1){1}}
\put(9,0.3){\makebox(0,0){$t$}}
\put(11,0.3){\makebox(0,0){$y$}}
\put(13,0.3){\makebox(0,0){$z$}}
\put(8.2,0.5){\makebox(0,0){$G_{2}$}}
\end{picture}\caption{$G_{1},G_{2}$ are K\"{o}nig-Egerv\'{a}ry graphs,
\textrm{core}$(G_{1})=\left\{  a,b,c\right\}  $, \textrm{core}$(G_{2}%
)=\left\{  t,y,z\right\}  $.}%
\label{fig11222}%
\end{figure}
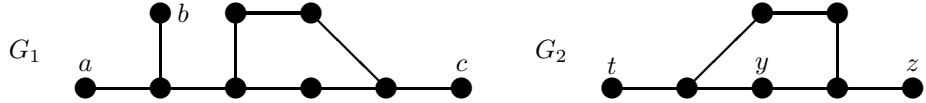
\end{remark}

It is worth mentioning that the problem of whether there are vertices in a
given graph $G$ belonging to $\mathrm{core}\left(  G\right)  $ is
\textbf{NP}-hard \cite{BorGolLev}. In \cite{LevManTAMC} we have presented both
sequential and parallel algorithms finding $\mathrm{core}\left(  G\right)  $
in polynomial time for K\"{o}nig-Egerv\'{a}ry graphs. By Theorem \ref{th22}, a
unicyclic graph is either a K\"{o}nig-Egerv\'{a}ry graph or its $\mathrm{core}%
\left(  G\right)  $ equals a union of cores of a finite number of some special
subtrees. Therefore, we get the following.

\begin{corollary}
If $G$ is a unicyclic graph, then $\mathrm{core}\left(  G\right)  $ is
computable in polynomial time.
\end{corollary}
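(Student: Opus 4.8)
The plan is to split according to Lemma \ref{lem2}: a unicyclic graph $G$ of order $n$ satisfies either $\alpha(G)+\mu(G)=n$ (so $G$ is K\"{o}nig-Egerv\'{a}ry) or $\alpha(G)+\mu(G)=n-1$. In the first case one simply invokes the polynomial-time algorithm of \cite{LevManTAMC} that computes $\mathrm{core}$ of any K\"{o}nig-Egerv\'{a}ry graph. In the second case Theorem \ref{th22} reduces the problem to computing $\mathrm{core}(T_{x})$ for every $x\in N_{1}(C)$; since each tree $T_{x}$ is bipartite, hence K\"{o}nig-Egerv\'{a}ry, each such core is again obtained by the algorithm of \cite{LevManTAMC}, and there are at most $n$ trees $T_{x}$, each on at most $n$ vertices, so the union $\cup\{\mathrm{core}(T_{x}):x\in N_{1}(C)\}$ is assembled in polynomial time. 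Locating the unique cycle $C$, the set $N_{1}(C)$, and the individual trees $T_{x}$ is done by a single depth-first search in linear time.

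The one point that needs separate justification is the ability to decide, in polynomial time, which of the two cases we are in, equivalently whether $\alpha(G)+\mu(G)=n$; this is where the main (and essentially only) obstacle lies, because $\alpha$ is \textbf{NP}-hard for general graphs. Now $\mu(G)$ is computable in polynomial time by Edmonds' maximum matching algorithm, so it remains to compute $\alpha(G)$. For this, fix any vertex $v\in V(C)$: a maximum independent set of $G$ either contains $v$, and then its remaining vertices form a maximum independent set of the forest $G-N[v]$, or it omits $v$, and then it is a maximum independent set of the forest $G-v$. Hence $\alpha(G)=\max\{1+\alpha(G-N[v]),\alpha(G-v)\}$, and since the independence number of a forest is computable in linear time by a standard bottom-up procedure, $\alpha(G)$ — and therefore the K\"{o}nig-Egerv\'{a}ry test — is obtained in polynomial time.

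Putting these together gives the algorithm: compute $\mu(G)$ and $\alpha(G)$ as above; if $\alpha(G)+\mu(G)=n$, run the K\"{o}nig-Egerv\'{a}ry core algorithm on $G$; otherwise perform a depth-first search to extract $C$, $N_{1}(C)$ and each $T_{x}$, run the same core algorithm on each $T_{x}$, and output the union. Every step is polynomial, so $\mathrm{core}(G)$ is computable in polynomial time. The only delicate ingredient is the reduction of $\alpha(G)$ to independence numbers of forests used in the recognition step; the rest is a direct combination of Theorem \ref{th22}, Lemma \ref{lem2}, and the cited algorithm for K\"{o}nig-Egerv\'{a}ry graphs.
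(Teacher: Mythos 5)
Your proposal is correct and follows essentially the same route as the paper: split by Lemma \ref{lem2} into the K\"{o}nig-Egerv\'{a}ry case, handled by the algorithm of \cite{LevManTAMC}, and the case $\alpha(G)+\mu(G)=n-1$, where Theorem \ref{th22} reduces the computation to cores of the bipartite trees $T_{x}$. Your additional justification of the polynomial-time recognition step, via $\alpha(G)=\max\{1+\alpha(G-N[v]),\alpha(G-v)\}$ for $v\in V(C)$ together with Edmonds' algorithm for $\mu(G)$, is a valid filling-in of a detail the paper leaves implicit.
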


\section{Conclusions}

The main purpose of this paper is to investigate the structure of
$\mathrm{core}\left(  G\right)  $ for unicyclic graphs. One the one hand, we
have succeeded to represent $\mathrm{core}\left(  G\right)  $ as the union of
cores of some specific subtrees of a non K\"{o}nig-Egerv\'{a}ry unicyclic
graph $G$. On the other hand, it is still not clear if there exists a
characterization of this kind for bipartite unicyclic graphs and/or
non-bipartite K\"{o}nig-Egerv\'{a}ry graphs.

\end{document}